\documentclass[english]{article}
\usepackage[T1]{fontenc}
\usepackage[latin9]{inputenc}
\usepackage{amsmath,amsthm} 
\usepackage{graphicx}
\usepackage{amssymb}
\usepackage{enumitem}
\usepackage{natbib} 
\setcitestyle{round}
\usepackage{bussproofs}
\usepackage[letterpaper]{geometry} 
\geometry{verbose,tmargin=1in,bmargin=1in,lmargin=1in,rmargin=1in} 
\usepackage{appendix}

\usepackage{thmtools,thm-restate}

\usepackage{siunitx}

\usepackage{tikz}
\usepackage{pgf} 
\usetikzlibrary{patterns,automata,arrows,shapes,snakes,topaths,trees,backgrounds,positioning,through,calc}
\usepackage{setspace}
\usepackage{multicol}
\usepackage{graphicx}

\theoremstyle{definition}
\newtheorem{theorem}{Theorem}

\newtheorem{definition}[theorem]{Definition}
\newtheorem{lemma}[theorem]{Lemma}

\makeatletter  

\usepackage{stmaryrd}  
\usepackage{comment}
\usepackage{hyperref}
\hypersetup{colorlinks=true, citecolor=blue, linkcolor=blue}  

\makeatother

\usepackage{babel} 
\onehalfspace
 
\begin{document} 

\title{An impossibility theorem concerning \\ positive involvement in voting}

\author{Wesley H. Holliday \\ University of California, Berkeley}

\date{{\normalsize Published in \textit{Economics Letters}, Vol.~236, 2024, 111589.}}

\maketitle

\begin{abstract} In social choice theory with ordinal preferences, a voting method satisfies the axiom of \textit{positive involvement} if adding to a preference profile a voter who ranks an alternative uniquely first cannot cause that alternative to go from winning to losing. In this note, we prove a new impossibility theorem concerning this axiom: there is no ordinal voting method satisfying positive involvement that also satisfies the Condorcet winner and loser criteria, resolvability, and a common invariance property for Condorcet methods, namely that the choice of winners depends only on the ordering of majority margins by size.\end{abstract}

\section{Introduction}\label{Intro}

A basic assumption of democratic voting is that increased voter support for a candidate should not harm that candidate's chances of winning. In social choice theory with ordinal preferences, one formalization of this idea is given by the axiom of \textit{positive involvement} (\citealt{Saari1995}): adding to a preference profile a voter who ranks an alternative uniquely first cannot cause that alternative to go from winning to losing.\footnote{Of course, such a voter expresses more than just support for their favorite alternative if they also rank other alternatives. A weaker formalization is therefore \textit{bullet-vote positive involvement}: adding a voter who ranks an alternative uniquely first and all other alternatives in a tie below their favorite cannot cause their favorite to go from winning to losing.} Perhaps surprisingly, many well-known voting methods---especially Condorcet-consistent voting methods---violate this axiom (see \citealt{Perez1995}). In this note, we prove a new impossibility theorem in this vein: there is no ordinal voting method satisfying positive involvement that also satisfies the Condorcet winner and loser criteria (\citealt{Condorcet1785}), resolvability  (\citealt{Tideman1986}), and a common invariance property for Condorcet methods, namely that the choice of winners depends only on the ordering of majority margins by size. Methods satisfying this \textit{ordinal margin invariance} include Minimax (\citealt{Simpson1969}, \citealt{Kramer1977}), Ranked~Pairs (\citealt{Tideman1987}), Beat Path (\citealt{Schulze2011}), Weighted Covering (\citealt{Dutta1999}, \citealt{Fernandez2018}), Split Cycle (\citealt{HP2023}), and Stable Voting (\citealt{HP2023b}). Thus, our theorem helps explain why none of these voting methods satisfies all the stated axioms, as well as motivating the search for new methods satisfying all the axioms except for ordinal margin invariance.

\section{Preliminaries}

Fix infinite sets $\mathcal{X}$ and $\mathcal{V}$ of \textit{alternatives} and \textit{voters}, respectively. A \textit{profile} $\mathbf{P}$ is a function  from some nonempty finite $V(\mathbf{P})\subseteq\mathcal{V}$, called the set of \textit{voters in $\mathbf{P}$}, to the set of strict weak orders on some nonempty finite $X(\mathbf{P})\subseteq\mathcal{X}$, called the set of \textit{alternatives in $\mathbf{P}$}.\footnote{The proof of Theorem \ref{MainThm} easily adapts to a setting where the set of alternatives cannot vary between profiles, assuming at least four alternatives (for more, add indefensible ones below the four in the proof). But we need the set of voters to be variable.} For $i\in V(\mathbf{P})$ and $x,y\in X(\mathbf{P})$,  take $(x,y)\in\mathbf{P}(i)$ to mean that  $i$ strictly prefers  $x$ to  $y$.  $\mathbf{P}$ is \textit{linear} if for each $i\in V(\mathbf{P})$,  $\mathbf{P}(i)$ is a linear order.\footnote{Theorem \ref{MainThm} holds even if we restrict the domain of voting methods to linear profiles, but we allow non-linear profiles in our setup for the sake of the distinction between the two parts of Lemma \ref{RefineLem}.} Given $x,y\in X(\mathbf{P})$, the \textit{margin of $x$ over $y$} is defined by
\[\text{Margin}_\mathbf{P}(x,y) = \text{Support}_\mathbf{P}(x,y) -  \text{Support}_\mathbf{P}(y,x) \mbox{, where } \text{Support}_\mathbf{P}(a,b) =|\{i\in V(\mathbf{P})\mid (a,b)\in \mathbf{P}(i)\}| .\]
A \textit{voting method} is a function $F$ assigning to each profile $\mathbf{P}$ a nonempty $F(\mathbf{P})\subseteq X(\mathbf{P})$.\footnote{Thus, we build the axiom of \textit{universal domain} into the definition of our voting methods.} If $|F(\mathbf{P})|>1$, we assume some further (possibly random) tiebreaking process ultimately narrows $F(\mathbf{P})$ down to one alternative. A voting method $G$ \textit{refines} $F$ (for some class of profiles) if for any profile $\mathbf{P}$ (in the class), $G(\mathbf{P})\subseteq F(\mathbf{P})$. 

A voting method $F$ satisfies  \textit{positive involvement} (see~\citealt{Saari1995}, \citealt{Perez2001}, \citealt{HP2021b}) if for any profiles $\mathbf{P},\mathbf{P}'$, if $x\in F(\mathbf{P})$ and $\mathbf{P}'$ is obtained from $\mathbf{P}$ by adding one voter who ranks $x$ uniquely in first place, then $x\in F(\mathbf{P}')$.\footnote{The requirement that the new voter ranks $x$ \textit{uniquely} first is crucial; a stronger version of the axiom that applies whenever the new voter does not strictly prefer any $y$ to $x$ is inconsistent with the Condorcet winner criterion (\citealt{Perez2001}, \citealt{Duddy2014}).} Thus, you cannot cause $x$ to lose by ranking $x$ first. According to P\'{e}rez \citeyearpar[p.~605]{Perez2001}, this ``may be seen as the minimum to require concerning the coherence in the winning set when new voters~are~added.''

A voting method $F$ satisfies the \textit{Condorcet winner criterion} (\citealt{Condorcet1785}) if for every profile $\mathbf{P}$ with a Condorcet winner, $F(\mathbf{P})$ contains only the Condorcet winner. Recall that $x$ is a \textit{Condorcet winner} (resp.~\textit{weak Condorcet winner}) if $x$ beats (resp.~does not lose to) any other alternative head-to-head, i.e., for every $y\in X(\mathbf{P})\setminus\{x\}$,  $\text{Margin}_\mathbf{P}(x,y)>0$ (resp.~$\text{Margin}_\mathbf{P}(x,y)\geq 0$).  $F$ satisfies the \textit{weak Condorcet winner criterion} if for every profile $\mathbf{P}$ with a weak Condorcet winner, $F(\mathbf{P})$ contains only weak Condorcet winners. 

A key tool for reasoning about voting methods satisfying both positive involvement and the Condorcet winner criterion is given by the following voting method, appearing implicitly in Lemma 3 of \citealt{Perez1995}  and studied explicitly in \citealt{Kasper2019}.

\begin{definition}\label{DefenseDef} Given a profile $\mathbf{P}$, define the \textit{defensible set of $\mathbf{P}$} as
\[D(\mathbf{P})=\{x\in X(\mathbf{P})\mid \mbox{for all } y\in X(\mathbf{P})\mbox{, there exists }z\in X(\mathbf{P}): \text{Margin}_\mathbf{P}(z,y)\geq \text{Margin}_\mathbf{P}(y,x)\}.\]
\end{definition}
\noindent As a voting method, $D$ is a special case, based on majority margins, of Heitzig's \citeyearpar{Heitzig2002} family of methods that are \textit{weakly immune to binary arguments}.\footnote{In particular, $D$ is the coarsest voting method satisfying all of Heitzig's axioms $\mathrm{wIm}_{P_\alpha}$ for $\alpha\in (0,1]$.} The idea can also be found in the argumentation theory of Dung \citeyearpar{Dung1995}.\footnote{In Dung's \citeyearpar[Definition~6.1]{Dung1995} terms, the alternatives in $D(\mathbf{P})$ are \textit{acceptable} with respect to $X(\mathbf{P})$ given each attack relation $attacks_n$, for each positive integer $n$, defined by $attacks_n(x,y)$ if $\text{Margin}_\mathbf{P}(x,y)\geq n$.} Our choice of the term  `defensible' is based on the following intuition. Although $x$ may lose head-to-head to $y$, possibly prompting the supporters of $y$ over $x$ to call for the overthrow of $x$ in favor of $y$, we can defend the choice of $x$ on the following grounds, when applicable: there is another $z$ that beats $y$  head-to-head by a margin at least as large as that by which $y$ beat $x$; thus, the reason invoked to overthrow $x$ in favor of $y$ would immediately provide at least as strong a reason to overthrow $y$ in favor of $z$, undercutting the original idea to overthrow $x$ in favor of $y$. 

Part \ref{RefineLem1} of the following is equivalent to Lemma 3 of \citealt{Perez1995} (cf.~Lemma 1 of \citealt{Perez2001}),\footnote{For linear profiles, where $\text{Support}_\mathbf{P}(x,y) + \text{Support}_\mathbf{P}(y,x)=|V(\mathbf{P})|$, one can show that $\text{Margin}_\mathbf{P}(z,y)\geq \text{Margin}_\mathbf{P}(y,x)$ is equivalent to $\text{Support}(x,y)\geq \text{Support}(y,z)$, which is the relevant condition in Lemma 3 of \citealt{Perez1995}.} which is in turn an adaptation of a similar result of \citealt{Moulin1988}, but we include a proof to keep this note self-contained.

\begin{lemma}\label{RefineLem} $\,$
\begin{enumerate}
\item\label{RefineLem1} Any voting method satisfying positive involvement and the Condorcet winner criterion refines the defensible set for linear profiles.
\item\label{RefineLem2} Any voting method satisfying positive involvement and the weak Condorcet winner criterion refines the defensible set for all profiles.
\end{enumerate}
\end{lemma}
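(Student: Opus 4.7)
The plan is to prove each part by contraposition: assume $x\in F(\mathbf{P})$ but $x\notin D(\mathbf{P})$ and derive a contradiction. Unfolding $D$, there is some $y\in X(\mathbf{P})$ with $\text{Margin}_\mathbf{P}(z,y)<\text{Margin}_\mathbf{P}(y,x)$ for every $z\in X(\mathbf{P})$; writing $m:=\text{Margin}_\mathbf{P}(y,x)$, the choice $z=y$ yields $m>0$. The witnessing profile $\mathbf{P}'$ will be obtained from $\mathbf{P}$ by adding $k:=m-1$ new voters (with fresh IDs from $\mathcal{V}$), each ranking $x$ uniquely first, $y$ immediately below $x$, and every other alternative below $y$---ordered linearly for Part~1, or with arbitrary ties for Part~2. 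Since each added voter ranks $x$ uniquely first, $k$ applications of positive involvement starting from $x\in F(\mathbf{P})$ yield $x\in F(\mathbf{P}')$.

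Direct bookkeeping gives $\text{Margin}_{\mathbf{P}'}(y,x)=m-k=1$ and, for each $z\in X(\mathbf{P})\setminus\{x,y\}$, $\text{Margin}_{\mathbf{P}'}(y,z)=-\text{Margin}_\mathbf{P}(z,y)+k$. For Part~2 the hypothesis yields $\text{Margin}_\mathbf{P}(z,y)\leq m-1$, hence $\text{Margin}_{\mathbf{P}'}(y,z)\geq 0$, so $y$ is a weak Condorcet winner of $\mathbf{P}'$, while $\text{Margin}_{\mathbf{P}'}(x,y)=-1<0$ shows $x$ is not. The weak Condorcet winner criterion then forces $x\notin F(\mathbf{P}')$, a contradiction.

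For Part~1 I exploit linearity of $\mathbf{P}$: because every voter of $\mathbf{P}$ casts a linear order, all majority margins of $\mathbf{P}$ share the parity of $|V(\mathbf{P})|$, so the strict inequality $\text{Margin}_\mathbf{P}(z,y)<m$ sharpens to $\text{Margin}_\mathbf{P}(z,y)\leq m-2$. Substituting, $\text{Margin}_{\mathbf{P}'}(y,z)\geq 1$, so $y$ is the \emph{strict} Condorcet winner in $\mathbf{P}'$ and the Condorcet winner criterion gives $F(\mathbf{P}')=\{y\}$, again contradicting $x\in F(\mathbf{P}')$; this subsumes the degenerate case $m=1$, $k=0$ (only possible when $|V(\mathbf{P})|$ is odd), in which $\mathbf{P}'=\mathbf{P}$ and $y$ is already the Condorcet winner of $\mathbf{P}$ itself. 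The main subtlety is calibrating $k=m-1$ so that a single construction simultaneously keeps $y$ ahead of $x$ and brings $y$ ahead of every other alternative; the parity identity for linear profiles is precisely what promotes the weak conclusion of Part~2 to the strict conclusion of Part~1 with the same $k$.
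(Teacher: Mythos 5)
Your proof is correct and follows essentially the same strategy as the paper's: add voters ranking $x$ uniquely first and $y$ second to turn $y$ into a (weak) Condorcet winner while keeping $x \in F(\mathbf{P}')$ by positive involvement, using the parity of margins in linear profiles to upgrade the weak conclusion to the strict one. The only (inessential) difference is the calibration of the number of added voters: you add $\mathrm{Margin}_\mathbf{P}(y,x)-1$, the most that keeps $y$ strictly ahead of $x$, whereas the paper adds $\max_z \mathrm{Margin}_\mathbf{P}(z,y)+1$ (resp.\ $\max_z \mathrm{Margin}_\mathbf{P}(z,y)$), the least that pushes $y$ ahead of (resp.\ level with) everyone else.
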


\begin{proof} For part \ref{RefineLem1}, let $F$ be such a method, $\mathbf{P}$ a linear profile, and $x\in F(\mathbf{P})$. Toward a contradiction, suppose $x\not\in D(\mathbf{P})$. Hence there is a $y\in X(\mathbf{P})$ with $\text{Margin}_\mathbf{P}(y,x)>0$ and for every $z\in X(\mathbf{P})$, $\text{Margin}_\mathbf{P}(z,y)< \text{Margin}_\mathbf{P}(y,x)$. Let $k= max\{\text{Margin}_\mathbf{P}(z,y) \mid z\in X(\mathbf{P})\}$, so $k<\text{Margin}_\mathbf{P}(y,x)$. Since $\mathbf{P}$ is linear, all margins have the same parity, so $k+1<\text{Margin}_\mathbf{P}(y,x)$. Let $\mathbf{P}'$ be obtained from $\mathbf{P}$ by adding $k+1$ voters who rank $x$ uniquely first and $y$ uniquely second, followed by any linear order of the remaining alternatives. Then by positive involvement, $x\in F(\mathbf{P}')$. But in $\mathbf{P}'$, $y$ is the Condorcet winner, so by the Condorcet winner criterion,  $F(\mathbf{P}')=\{y\}$, contradicting $x\in F(\mathbf{P}')$. Thus, $x\in D(\mathbf{P})$.

For part \ref{RefineLem2} and an arbitrary  $\mathbf{P}$, the argument is similar only the margins may have different parities, so we cannot infer from $k< \text{Margin}_\mathbf{P}(y,x)$ that  $k+1<\text{Margin}_\mathbf{P}(y,x)$. In this case, we add only $k$ voters who rank $x$ uniquely first and $y$ uniquely second, followed by any linear order of the remaining alternatives. Then in $\mathbf{P}'$, $y$ is a \textit{weak} Condorcet winner, and $\text{Margin}_{\mathbf{P}'}(y,x)>0$. It follows by the weak Condorcet winner criterion that $x\not\in F(\mathbf{P}')$, contradicting positive involvement.\end{proof}

Given the appeal of positive involvement and the Condorcet winner criterion, it is of significant interest to explore refinements of the defensible set (cf.~\citealt[\S~4]{Kasper2019}).  The Minimax and Split Cycle methods satisfy these axioms (\citealt{Perez2001}, \citealt{HP2023}), so they refine the defensible set.\footnote{This is also clear directly. Minimax selects the alternatives whose worst head-to-head loss is smallest among any alternative's worst head-to-head loss. Thus, if $x$ is a Minimax winner that loses head-to-head to $y$, there must be an alternative $z$ to which $y$ loses by at least as large a margin. As noted by Young \citeyearpar{Young1977}, the Minimax winner is the ``\textit{most stable against overthrow}'' (p.~350). If $x$ is a Split Cycle winner that loses head-to-head to $y$, then there is a sequence of alternatives $y_1,\dots,y_n$ with $y=y_1$ and $x=y_n$ such that each  loses to the next by a margin at least as large as that by which $x$ lost to $y$; hence the reason for overthrowing $x$ in favor of $y$ would similarly justify a sequence of revolutions \textit{leading right back to $x$}.} However, refinements of the defensible set may violate positive involvement. Examples include some refinements of Split Cycle such as Beat Path, Ranked~Pairs, and Stable Voting~(see \citealt{HP2023}).

\section{Impossibility}\label{Imposs}

Let us now try adding further axioms to our wish list. A \textit{Condorcet loser} in a profile $\mathbf{P}$ is an $x\in X(\mathbf{P})$ that loses head-to-head to every other alternative, i.e., for every $y\in X(\mathbf{P})\setminus \{x\}$,  $\text{Margin}_\mathbf{P}(y,x)>0$.  $F$ satisfies the \textit{Condorcet loser criterion} if for every profile $\mathbf{P}$, $F(\mathbf{P})$ does not contain a Condorcet loser. 

The \textit{resolvability} axiom can be formulated in two ways, either of which works below. The first formulation (from \citealt{Tideman1986}), \textit{single-voter resolvability}, states that for any profile $\mathbf{P}$, if $F(\mathbf{P})$ contains multiple alternatives, then there is a $\mathbf{P}'$ obtained from $\mathbf{P}$ by adding only one new voter such that $F(\mathbf{P}')$ contains only one alternative; thus, every tie can be broken by adding just one voter. The second formulation (see \citealt[\S~4.2.1]{Schulze2011}),  \textit{asymptotic resolvability}, states that for any positive integer $m$, in the limit as the number of voters goes to infinity, the proportion of linear profiles $\mathbf{P}$ for $m$ alternatives with $|F(\mathbf{P})|>1$  goes to zero; thus, ties become vanishingly rare.

The final axiom is \textit{ordinal margin invariance}. Informally, $F$ satisfies this axiom if its selection of alternatives  depends only on the ordering of majority margins by size, not on the absolute margins or other features of the profile. Formally, given a profile $\mathbf{P}$, define $\mathbb{M}(\mathbf{P})=(M,\succ)$, the \textit{ordinal margin graph of $\mathbf{P}$}, where $M$ is a directed graph whose set of vertices is $X(\mathbf{P})$ with an edge from $x$ to $y$ when $\text{Margin}_\mathbf{P}(x,y)>0$, and $\succ$ is a strict weak order of the edges of $M$ such that $(a,b)\succ (c,d)$ if $\text{Margin}_\mathbf{P}(a,b)>\text{Margin}_\mathbf{P}(c,d)$. Then $F$ satisfies ordinal margin invariance if for any $\mathbf{P}$,  $\mathbf{P}'$, if $\mathbb{M}(\mathbf{P})=\mathbb{M}(\mathbf{P}')$, then $F(\mathbf{P})=F(\mathbf{P}')$. As a corollary of Debord's Theorem (\citealt{Debord1987}, cf.~\citealt[Theorem 4.1]{Fischer2016}), any pair $\mathbb{M}=(M,\succ)$ of an asymmetric directed graph and a strict weak order of its edges is the ordinal margin graph of some profile. Hence if $F$ satisfies ordinal margin invariance, we may also regard $F$ as a function that takes in $\mathbb{M}=(M,\succ)$ and returns a nonempty subset $F(\mathbb{M})$ of its vertices.

While the normative appeal of positive involvement and the Condorcet criteria, as well as the practical relevance of resolvability, is clear, it is less obvious whether ordinal margin invariance should be a desideratum.  One point in favor of methods satisfying ordinal margin invariance is that a small amount of noise in the collection of voter preferences is unlikely to change the ordinal margin graph and therefore unlikely to change the selection of winners, rendering such rules quite robust to noise (cf.~\citealt{Procaccia2006}).

For the following, a \textit{linearly edge-ordered tournament} is a pair $(M,\succ)$ where $M$ is a tournament, i.e., an asymmetric directed graph in which any two distinct vertices are related by an edge in some direction, and $\succ$ is a linear order of the tournament's edges.

\begin{lemma}\label{ResolveLem} If $F$ satisfies ordinal margin invariance and single-voter (resp.~asymptotic) resolvability, then $F$ selects a unique winner in any profile whose ordinal margin graph is a linearly edged-ordered tournament.
\end{lemma}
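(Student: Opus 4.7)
My plan is to prove the contrapositive. Assume $F$ satisfies ordinal margin invariance and fix a profile $\mathbf{P}$ whose ordinal margin graph $\pi = \mathbb{M}(\mathbf{P})$ is a linearly edge-ordered tournament; suppose toward contradiction that $|F(\mathbf{P})| > 1$. By ordinal margin invariance, $|F(\mathbf{Q})| = |F(\mathbf{P})| > 1$ for every profile $\mathbf{Q}$ with $\mathbb{M}(\mathbf{Q}) = \pi$. The proof then exhibits witnesses violating each resolvability axiom.

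For single-voter resolvability, I blow up $\mathbf{P}$ by taking three disjoint copies on fresh voters (using the infinitude of $\mathcal{V}$), producing a profile $\mathbf{Q}$ with every margin tripled. Then every positive margin of $\mathbf{Q}$ is at least $3$, any two distinct positive margins of $\mathbf{Q}$ differ by at least $3$, and still $\mathbb{M}(\mathbf{Q}) = \pi$. Since adding a single voter shifts each margin by at most $1$ in absolute value, no margin of $\mathbf{Q}$ can change sign and no two margins can swap relative order under any one-voter addition. Hence every one-voter extension $\mathbf{Q}'$ of $\mathbf{Q}$ satisfies $\mathbb{M}(\mathbf{Q}') = \pi$, so $|F(\mathbf{Q}')| > 1$, directly contradicting single-voter resolvability applied to $\mathbf{Q}$.

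For asymptotic resolvability, I invoke a standard central limit theorem argument. Set $m = |X(\mathbf{P})|$. Under the impartial-culture distribution on $n$-voter linear profiles on $m$ fixed alternatives, the rescaled margin vector $(\text{Margin}(x,y)/\sqrt{n})_{x\neq y}$ converges in distribution to a non-degenerate centered Gaussian on $\mathbb{R}^{m(m-1)/2}$ with full support. The event $\{\mathbb{M} = \pi\}$ corresponds to an open polyhedral cone in margin-space, cut out by the sign conditions and the strict linear ordering of absolute values specified by $\pi$; because $\pi$ is a linearly edge-ordered tournament, this cone has nonempty interior and thus strictly positive Gaussian measure $c_\pi$. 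Consequently, for all sufficiently large $n$ the proportion of $n$-voter linear profiles with ordinal margin graph $\pi$ is bounded below by $c_\pi/2$. By ordinal margin invariance every such profile has $|F| > 1$, contradicting asymptotic resolvability.

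The step I expect to require the most care is the CLT argument for the asymptotic case: one must verify that the limiting multivariate Gaussian is non-degenerate (so it has everywhere-positive density) and that the event $\{\mathbb{M} = \pi\}$ really does correspond to an \emph{open} cone, which is precisely where the \emph{linear} edge-ordering and \emph{tournament} hypotheses on $\pi$ are essential, ruling out boundary constraints of Gaussian measure zero. The single-voter case, by contrast, is clean: the scaling-by-three trick deterministically blocks every possible one-voter perturbation of the ordinal margin graph.
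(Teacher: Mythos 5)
Your proof is correct. The single-voter half is essentially the paper's own argument: the paper likewise passes to $3\mathbf{P}$ and asserts that adding one voter cannot alter the ordinal margin graph; you simply make the quantitative reason explicit (tripled margins are at least $3$ in absolute value and pairwise at least $3$ apart, while one ballot moves each margin by at most $1$ and each difference of two margins by at most $2$), which is a harmless and indeed helpful elaboration. The asymptotic half is where you genuinely diverge. The paper disposes of it by citing Harrison-Trainor's Theorem~11.2, which states exactly that every linearly edge-ordered tournament is the ordinal margin graph of a nonvanishing limiting proportion of linear profiles; you instead sketch a self-contained proof of that fact via the multivariate CLT under the uniform (impartial-culture) distribution. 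Your sketch is the standard route to such realizability-in-the-limit results and is sound, modulo the two points you yourself flag: (i) the per-voter contribution to the margin vector has a positive-definite covariance matrix (true, and checkable directly: overlapping pairs of alternatives have covariance $\pm 1/3$, disjoint pairs covariance $0$, and the resulting matrix is nonsingular), and (ii) the event $\{\mathbb{M}=\pi\}$ is an open homogeneous cone with nonempty interior whose boundary lies in finitely many hyperplanes, hence has Gaussian measure zero, so the portmanteau theorem gives convergence of the profile proportion to the strictly positive Gaussian measure of the cone. The trade-off is the usual one: the citation keeps the note short, while your argument makes the lemma self-contained at the cost of importing probabilistic machinery that would need to be written out carefully in a full proof.
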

 
\begin{proof} For single-voter resolvability, suppose $F$ selects multiple winners in a profile $\mathbf{P}$ whose ordinal margin graph is a linearly edge-ordered tournament. Consider $3\mathbf{P}$, the profile obtained from $\mathbf{P}$ by replacing each voter with three copies of that voter. Since $3\mathbf{P}$ has the same ordinal margin graph as $\mathbf{P}$, $F$ selects the same  winners in $3\mathbf{P}$ as in $\mathbf{P}$. But by adding a single voter to $3\mathbf{P}$, it is impossible to obtain a $\mathbf{P}'$ whose ordinal margin graph differs from that of $3\mathbf{P}$, so it is impossible to obtain a profile with a unique winner. Hence $F$ does not satisfy single-voter resolvability.

For asymptotic resolvability, Harrison-Trainor \citeyearpar[Theorem~11.2]{HT2020} shows that for every positive integer $m$ and linearly edge-ordered tournament $T$ for $m$ alternatives, in the limit as the number of voters goes to infinity, the proportion of linear profiles for $m$ alternatives whose ordinal margin graph is $T$ is nonzero. Thus, if $F$ picks multiple winners in such a tournament, $F$ does not satisfy asymptotic resolvability.\end{proof}

Thanks to Lemma \ref{ResolveLem}, either version of resolvability works below, so we simply speak of `resolvability'. The defensible set does not satisfy either version, since it returns multiple winners for some linearly edge-ordered tournaments. Table \ref{TournTable} shows the extent of its irresoluteness for linearly edge-ordered tournaments for four alternatives in comparison to several other voting methods.\footnote{The Smith set (\citealt{Smith1973}) is the smallest nonempty set of alternatives such that every alternative in the set beats ever alternative outside the set head-to-head. The uncovered set (\citealt{Fishburn1977}, \citealt{Miller1980}) is the set of all alternatives $x$ for which there is no $y$ such that $y$ beats $x$ and beats every $z$ that $x$ beats; this definition is equivalent to others (see \citealt{Duggan2013}) provided there are no zero margins between distinct alternatives, which is the case for Table \ref{TournTable}. In this case, the Copeland winners (\citealt{Copeland1951}) are the alternatives that beat the most other alternatives. A notebook with code to generate Table \ref{TournTable} and Figures \ref{UpperLeftToMiddleProfs}--\ref{LowerRightToMiddleProfs} below is available at \href{https://github.com/wesholliday/pos-inv}{github.com/wesholliday/pos-inv}.}

\begin{table}[h]
\begin{center}
\begin{tabular}{lccc}
 & \# with multiple winners & avg.~size of set & max.~size of set\\
Smith set & 960 & 2.375\phantom{111} & 4  \\
Uncovered set & 960 & 2\phantom{.000000} & 3  \\
Copeland  & 960 & 1.625\phantom{111} & 3 \\ 
Defensible set & 598 & 1.359375 & 3  \\
Defensible set $\cap$ Smith set & 583 & 1.34375\phantom{1} & 3 \\
Split Cycle & 104 & $1.0541\overline{6}\phantom{1}$ & 2  \\
Minimax & \phantom{11}0& 1\phantom{.000000} & 1 
\end{tabular}

\end{center}
\caption{Among the 1,920  linearly edge-ordered tournaments for four alternatives up to isomorphism, the number with multiple winners for a given method and the average (resp.~maximum) size of the set of winners.}\label{TournTable}
\end{table}

If we drop any of the axioms introduced so far besides the Condorcet winner criterion and ordinal margin invariance, then  there is a method satisfying the remaining axioms. This is shown by  Beat Path/Ranked Pairs/Stable Voting,  Minimax,\footnote{It is plausible that with more combinatorial work, our proof strategy for Theorem \ref{MainThm} could yield a characterization of Minimax for four alternatives using the axioms other than the Condorcet loser criterion~(cf.~\citealt{HP2023c}).} and Split Cycle in Table \ref{IndTable}, which also includes the Borda method (\citealt{Borda1781}) and Black's method\footnote{Black's method selects the Condorcet winner only, if one exists, and otherwise selects all Borda winners.} (\citealt{Black1958}) for comparison. However, no method satisfies all of the axioms.\footnote{In Theorem \ref{MainThm}, positive involvement can be replaced by \textit{negative involvement} (see \citealt{Perez2001}) by the proof of Proposition~3.19 in \citealt{Ding2023}, since ordinal margin invariance implies the neutral reversal axiom in that proposition.}

\begin{table}
\begin{center}
\begin{tabular}{lcccccc}
 					& Beat Path & Black's & Borda & Minimax  & Split Cycle \\
					& Ranked Pairs \\
					& Stable Voting \\
Positive involvement 		& $-$		&$-$ & $\checkmark$ & $\checkmark$ & $\checkmark$    \\
Condorcet winner 		& $\checkmark$  & $\checkmark$ & $-$ &  $\checkmark$ &   $\checkmark$  \\
Condorcet loser 		& $\checkmark$  & $\checkmark$ & $\checkmark$ & $-$ &   $\checkmark$  \\
Resolvability 			& $\checkmark$  &$\checkmark$ & $\checkmark$ &  $\checkmark$ &  $-$  \\
Ordinal margin invariance 	& $\checkmark$  & $-$ & $-$ &  $\checkmark$ & $\checkmark$
\end{tabular} 
\end{center}
\caption{Satisfaction ($\checkmark$) or violation ($-$) of the axioms by selected voting methods.}\label{IndTable}
\end{table}

\begin{theorem}\label{MainThm} There is no voting method satisfying positive involvement, the Condorcet winner criterion, the Condorcet loser criterion, resolvability, and ordinal margin invariance.
\end{theorem}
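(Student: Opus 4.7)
The plan is to assume toward contradiction that some voting method $F$ satisfies all five axioms and to derive a clash at a single ``middle'' profile on four alternatives. By ordinal margin invariance, I may work throughout at the level of ordinal margin graphs; by Lemma \ref{RefineLem}(\ref{RefineLem1}), on every linear profile $\mathbf{P}$ the set $F(\mathbf{P})$ is contained in the defensible set $D(\mathbf{P})$; and by Lemma \ref{ResolveLem}, on any linear profile whose ordinal margin graph is a linearly edge-ordered tournament $F$ selects exactly one alternative.

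The construction I would pursue is to exhibit three linear profiles $\mathbf{P}_{UL}$, $\mathbf{P}_{LR}$, $\mathbf{P}_M$ on alternatives $\{a,b,c,d\}$ with the following three properties. First, $\mathbf{P}_{UL}$ and $\mathbf{P}_{LR}$ each have a linearly edge-ordered tournament as ordinal margin graph, with $D(\mathbf{P}_{UL}) = \{a\}$ and $D(\mathbf{P}_{LR}) = \{b\}$; by Lemmas \ref{ResolveLem} and \ref{RefineLem} this forces $F(\mathbf{P}_{UL}) = \{a\}$ and $F(\mathbf{P}_{LR}) = \{b\}$. Second, $\mathbf{P}_M$ is obtained from $\mathbf{P}_{UL}$ by appending a single new voter who ranks $a$ uniquely first, and is simultaneously obtained from $\mathbf{P}_{LR}$ by appending a single new voter who ranks $b$ uniquely first (the two added voters differ in their orderings of the remaining alternatives but have different forced top choices). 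Third, $\mathbf{P}_M$ has a Condorcet winner distinct from both $a$ and $b$, or else has one of $a,b$ as a Condorcet loser. By positive involvement, $\{a,b\} \subseteq F(\mathbf{P}_M)$; by the Condorcet winner or Condorcet loser criterion, $F(\mathbf{P}_M)$ must exclude at least one of $a, b$; contradiction. The role of resolvability is to force a unique winner in $\mathbf{P}_{UL}$ and $\mathbf{P}_{LR}$ when their defensible sets are larger than a singleton variant of the construction might require; combined with refinement of $D$ and the Condorcet criteria on these source profiles, it pins down the winners that positive involvement then transports forward.

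The hard part is the combinatorial search for such a triple, rather than any new axiomatic idea. Since appending a voter ranking $a$ uniquely first only improves every margin involving $a$, such a perturbation can never make $a$ a Condorcet loser of $\mathbf{P}_M$; the construction must instead exploit how the two added voters order $\{b,c,d\}$ (resp.~$\{a,c,d\}$) so as to tip a third alternative $c$ into being the Condorcet winner of $\mathbf{P}_M$. This requires $\mathbf{P}_{UL}$ and $\mathbf{P}_{LR}$ to have head-to-head contests not involving their forced winner that are tight enough to flip with one extra ballot, while simultaneously having margins that collapse the defensible set to a singleton. The explicit profiles realizing this delicate balance are, I expect, precisely those drawn in Figures~\ref{UpperLeftToMiddleProfs}--\ref{LowerRightToMiddleProfs} and produced by the companion code on the author's GitHub repository.
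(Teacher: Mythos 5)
There is a genuine gap: the configuration your plan requires cannot exist, so the contradiction never materializes. Your property (3) has two disjuncts and both are provably unachievable. First, since your construction forces $a\in D(\mathbf{P}_{UL})$, no alternative $c\neq a$ can become a Condorcet winner after adding voters who all rank $a$ uniquely first: if $\mathrm{Margin}_{\mathbf{P}_{UL}}(c,a)=m>0$, defensibility supplies some $z$ (necessarily $z \ne a$) with $\mathrm{Margin}_{\mathbf{P}_{UL}}(z,c)\geq m$, and adding $k$ such voters keeps $c$ ahead of $a$ only if $k<m$, in which case $\mathrm{Margin}(z,c)\geq m-k>0$ still blocks $c$; this is just the proof of Lemma \ref{RefineLem} run in reverse. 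The same holds for $b$ on the $\mathbf{P}_{LR}$ side, so $\mathbf{P}_M$ has no Condorcet winner at all. Second, $a$ cannot be a Condorcet loser of $\mathbf{P}_M$: it cannot be one in $\mathbf{P}_{UL}$ (else $F(\mathbf{P}_{UL})=\{a\}$ already violates the Condorcet loser criterion at the source), and the added $a$-first voters only improve $a$'s margins --- you note this obstruction yourself but do not register that it kills the entire disjunct; likewise for $b$. With property (3) gone, your argument makes no effective use of the Condorcet loser criterion, yet the theorem is false without that axiom: Minimax satisfies the other four (Table \ref{IndTable}). Minimax also shows the rest of the configuration is unattainable: it refines $D$, satisfies positive involvement, and is resolute on linearly edge-ordered tournaments, so it could not simultaneously select $a$ in $\mathbf{P}_{UL}$, $b$ in $\mathbf{P}_{LR}$, and a unique winner in a resolute $\mathbf{P}_M$ reachable from both.

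The paper's proof is structured around exactly the obstacles above. It does not (and cannot) pin the initial graph down to a singleton defensible set; $D(\mathbb{M}_1)=\{a,d\}$, and the argument is a case split on whether $F(\mathbb{M}_1)$ is $\{a\}$ or $\{d\}$. In each case, \emph{two} source graphs are transported by positive involvement into a common middle graph ($\mathbb{M}_2$ or $\mathbb{M}_4$), and the clash is with \emph{resolvability at the middle} --- two distinct alternatives forced into a set that Lemma \ref{ResolveLem} says is a singleton --- not with a Condorcet criterion there. The Condorcet loser criterion enters at a \emph{source} graph, $\mathbb{M}_3$, whose defensible set is $\{b,d\}$ with $d$ simultaneously a Condorcet loser; this is what forces $F(\mathbb{M}_3)=\{b\}$ before transporting $b$ into $\mathbb{M}_2$. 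If you want to salvage your outline, the fix is to relocate the Condorcet loser criterion from the middle profile to a source profile whose defensible set is not a singleton, and to let resolvability, rather than a Condorcet criterion, deliver the final contradiction.
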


\begin{proof} Assume there is such an $F$.  To derive a contradiction, we use the ordinal margin graphs in Figure~\ref{RelMarginsFig}. The numbers  indicate the ordering $\succ$ from the smallest margin ($1$) to the largest  ($6$). The defensible set for each graph is shaded in gray. Since $F$ satisfies ordinal margin invariance and resolvability, $F$ returns a singleton for each graph by Lemma~\ref{ResolveLem}. Given $\mathbb{M}$ and $\mathbb{M}'$ and an alternative $x$, we write $\mathbb{M}\Rightarrow_x\mathbb{M}'$ in Figure~\ref{RelMarginsFig} if there are profiles $\mathbf{P},\mathbf{P}'$ such that $\mathbb{M}$ is the ordinal margin graph of $\mathbf{P}$, $\mathbb{M}'$ is the ordinal margin graph of $\mathbf{P}'$, and $\mathbf{P}'$ is obtained from $\mathbf{P}$ by adding voters all of whom rank $x$ uniquely first. The construction of such profiles is an integer linear programming problem, whose solution yields the profiles in Figures \ref{UpperLeftToMiddleProfs}--\ref{LowerRightToMiddleProfs}.\footnote{Although we minimize the number of voters in these profiles subject to the relevant constraints, this does not answer an interesting question: what is the minimal number of voters needed for the impossibility theorem itself (cf.~\citealt{Brandt2017b})?}

The defensible set for $\mathbb{M}_1$ is $\{a,d\}$. Then since $F$ satisfies positive involvement and the Condorcet winner criterion, $F(\mathbb{M}_1)\subseteq\{a,d\}$ by Lemma \ref{RefineLem}, so $F(\mathbb{M}_1)=\{a\}$ or $F(\mathbb{M}_1)=\{d\}$ by resolvability.

Suppose $F(\mathbb{M}_1)=\{d\}$. Then by positive involvement and resolvability, $F(\mathbb{M}_2)=\{d\}$. On the other hand,  $F(\mathbb{M}_3)\subseteq \{b,d\}$ by Lemma \ref{RefineLem}, but  $d\not\in F(\mathbb{M}_3)$ by the Condorcet loser criterion, so $F(\mathbb{M}_3)=\{b\}$.  Then by positive involvement,  $b\in F(\mathbb{M}_2)$, contradicting $F(\mathbb{M}_2)=\{d\}$.

Thus, $F(\mathbb{M}_1)=\{a\}$. Then by positive involvement and resolvability, $F(\mathbb{M}_4)=\{a\}$. But  $F(\mathbb{M}_5)=\{d\}$ by Lemma \ref{RefineLem}, so by positive involvement,  $d\in F(\mathbb{M}_4)$, contradicting $F(\mathbb{M}_4)=\{a\}$.\end{proof}

Thus, in the search for the ``holy grail'' of a voting method satisfying positive involvement, the Condorcet winner and loser criteria, and resolvability, we must drop the restriction of ordinal margin invariance.\footnote{Natural weakenings of ordinal margin invariance to consider are the C1.5 (\citealt{DeDonder2000}) and C2 (\citealt{Fishburn1977}) invariance conditions.} Whether such a method exists or another impossibility theorem awaits us is an important open question.\\

\begin{figure}[h]

\begin{center}

\begin{minipage}{1.75in}
\begin{center}

\begin{tikzpicture}

\node[circle,draw, minimum width=0.25in, fill=gray!75] at (0,0) (a) {$a$}; 
\node[circle,draw,minimum width=0.25in] at (3,0) (c) {$c$}; 
\node[circle,draw,minimum width=0.25in] at (1.5,1.5) (b) {$b$}; 

\node[circle,draw,minimum width=0.25in, fill=gray!75] at (1.5,-1.5) (d) {$d$}; 

\path[->,draw,thick] (a) to[pos=.7] node[fill=white] {$5$} (c);
\path[->,draw,thick] (b) to  node[fill=white] {$4$} (a);
\path[->,draw,thick] (d) to  node[fill=white] {$1$} (a);
\path[->,draw,thick] (d) to  node[fill=white] {$2$} (c);
\path[->,draw,thick] (b) to[pos=.7]   node[fill=white] {$3$} (d);
\path[->,draw,thick] (c) to   node[fill=white] {$6$} (b);
\end{tikzpicture}

$\mathbb{M}_1$

\end{center}
\end{minipage}$\Rightarrow_d$\begin{minipage}{1.75in}
\begin{center}
\begin{tikzpicture}

\node[circle,draw, minimum width=0.25in] at (0,0) (a) {$a$}; 
\node[circle,draw,minimum width=0.25in] at (3,0) (c) {$c$}; 
\node[circle,draw,minimum width=0.25in, fill=gray!75] at (1.5,1.5) (b) {$b$}; 

\node[circle,draw,minimum width=0.25in, fill=gray!75] at (1.5,-1.5) (d) {$d$}; 

\path[->,draw,thick] (a) to[pos=.7] node[fill=white] {$6$} (c);
\path[->,draw,thick] (b) to  node[fill=white] {$5$} (a);
\path[->,draw,thick] (d) to  node[fill=white] {$1$} (a);
\path[->,draw,thick] (d) to  node[fill=white] {$2$} (c);
\path[->,draw,thick] (b) to[pos=.7]   node[fill=white] {$3$} (d);
\path[->,draw,thick] (c) to   node[fill=white] {$4$} (b);
\end{tikzpicture}

$\mathbb{M}_2$
\end{center}
\end{minipage}$\Leftarrow_b$\begin{minipage}{1.75in}
\begin{center}
\begin{tikzpicture}

\node[circle,draw, minimum width=0.25in] at (0,0) (a) {$a$}; 
\node[circle,draw,minimum width=0.25in] at (3,0) (c) {$c$}; 
\node[circle,draw,minimum width=0.25in, fill=gray!75] at (1.5,1.5) (b) {$b$}; 

\node[circle,draw,minimum width=0.25in, fill=gray!75] at (1.5,-1.5) (d) {$d$}; 

\path[->,draw,thick] (a) to[pos=.7] node[fill=white] {$6$} (c);
\path[->,draw,thick] (b) to  node[fill=white] {$5$} (a);
\path[->,draw,thick] (a) to  node[fill=white] {$2$} (d);
\path[->,draw,thick] (c) to  node[fill=white] {$1$} (d);
\path[->,draw,thick] (b) to[pos=.7]   node[fill=white] {$3$} (d);
\path[->,draw,thick] (c) to   node[fill=white] {$4$} (b);
\end{tikzpicture}

$\mathbb{M}_3$
\end{center}
\end{minipage}
\end{center}

\begin{center}
\begin{minipage}{1.75in}
\begin{center}

\begin{tikzpicture}

\node[circle,draw, minimum width=0.25in, fill=gray!75] at (0,0) (a) {$a$}; 
\node[circle,draw,minimum width=0.25in] at (3,0) (c) {$c$}; 
\node[circle,draw,minimum width=0.25in] at (1.5,1.5) (b) {$b$}; 

\node[circle,draw,minimum width=0.25in, fill=gray!75] at (1.5,-1.5) (d) {$d$}; 

\path[->,draw,thick] (a) to[pos=.7] node[fill=white] {$5$} (c);
\path[->,draw,thick] (b) to  node[fill=white] {$4$} (a);
\path[->,draw,thick] (d) to  node[fill=white] {$1$} (a);
\path[->,draw,thick] (d) to  node[fill=white] {$2$} (c);
\path[->,draw,thick] (b) to[pos=.7]   node[fill=white] {$3$} (d);
\path[->,draw,thick] (c) to   node[fill=white] {$6$} (b);
\end{tikzpicture}

$\mathbb{M}_1$

\end{center}\end{minipage}$\Rightarrow_a$\begin{minipage}{1.75in}
\begin{center}
\begin{tikzpicture}

\node[circle,draw, minimum width=0.25in, fill=gray!75] at (0,0) (a) {$a$}; 
\node[circle,draw,minimum width=0.25in] at (3,0) (c) {$c$}; 
\node[circle,draw,minimum width=0.25in] at (1.5,1.5) (b) {$b$}; 

\node[circle,draw,minimum width=0.25in, fill=gray!75] at (1.5,-1.5) (d) {$d$}; 

\path[->,draw,thick] (a) to[pos=.7] node[fill=white] {$5$} (c);
\path[->,draw,thick] (b) to  node[fill=white] {$4$} (a);
\path[->,draw,thick] (d) to  node[fill=white] {$1$} (a);
\path[->,draw,thick] (d) to  node[fill=white] {$3$} (c);
\path[->,draw,thick] (b) to[pos=.7]   node[fill=white] {$2$} (d);
\path[->,draw,thick] (c) to   node[fill=white] {$6$} (b);
\end{tikzpicture}

$\mathbb{M}_4$
\end{center}
\end{minipage}$\Leftarrow_d$\begin{minipage}{1.75in}
\begin{center}
\begin{tikzpicture}

\node[circle,draw, minimum width=0.25in] at (0,0) (a) {$a$}; 
\node[circle,draw,minimum width=0.25in] at (3,0) (c) {$c$}; 
\node[circle,draw,minimum width=0.25in] at (1.5,1.5) (b) {$b$}; 

\node[circle,draw,minimum width=0.25in, fill=gray!75] at (1.5,-1.5) (d) {$d$}; 

\path[->,draw,thick] (a) to[pos=.7] node[fill=white] {$4$} (c);
\path[->,draw,thick] (b) to  node[fill=white] {$6$} (a);
\path[->,draw,thick] (d) to  node[fill=white] {$1$} (a);
\path[->,draw,thick] (d) to  node[fill=white] {$3$} (c);
\path[->,draw,thick] (b) to[pos=.7]   node[fill=white] {$2$} (d);
\path[->,draw,thick] (c) to   node[fill=white] {$5$} (b);
\end{tikzpicture}

$\mathbb{M}_5$
\end{center}
\end{minipage}
\end{center}
\caption{Ordinal margin graphs for the proof of Theorem \ref{MainThm}.}\label{RelMarginsFig}
\end{figure}

\begin{figure}[!htb]
\begin{center}
\begin{minipage}{1.5in}
\begin{center}
\setlength{\tabcolsep}{4pt} 
\begin{tabular}{cccccc}
$8$ & $8$ & $14$ & $6$ & $7$ & $2$\\\hline 
$a$ & $a$ & $b$ & $c$ & $c$ & $d$\\ 
$c$ & $d$ & $d$ & $b$ & $d$ & $c$\\ 
$b$ & $c$ & $a$ & $a$ & $b$ & $b$\\ 
$d$ & $b$ & $c$ & $d$ & $a$ & $a$
\end{tabular}
\end{center}
\begin{center}
$\mathbf{P}_1$
\end{center}

\begin{center}
{\footnotesize

$c$ beats $b$ by $31 - 14 = 17$

$a$ beats $c$ by $30 - 15 = 15$

$b$ beats $a$ by $29 - 16 = 13$

$b$ beats $d$ by $28 - 17 = 11$

$d$ beats $c$ by $24 - 21 = 3$

$d$ beats $a$ by $23 - 22 = 1$

}
\end{center}
\end{minipage} $+$ \begin{minipage}{.4in}
\begin{center}
\begin{tabular}{c}
$3$\\\hline 
$d$\\ 
$b$\\ 
$a$\\ 
$c$
\end{tabular}
\end{center}
\begin{center}
$\mathbf{P}_{1\to 2}$
\end{center}

\begin{center}
{\footnotesize 
$\,$

$\,$

$\,$

$\,$

$\,$

$\,$}
\end{center}
\end{minipage} $=$ \begin{minipage}{1.5in}

\begin{center}
\setlength{\tabcolsep}{4pt} 
\begin{tabular}{ccccccc}
$8$ & $8$ & $14$ & $6$ & $7$ & $3$ & $2$\\\hline 
$a$ & $a$ & $b$ & $c$ & $c$ & $d$ & $d$\\ 
$c$ & $d$ & $d$ & $b$ & $d$ & $b$ & $c$\\ 
$b$ & $c$ & $a$ & $a$ & $b$ & $a$ & $b$\\ 
$d$ & $b$ & $c$ & $d$ & $a$ & $c$ & $a$
\end{tabular}
\end{center}

\begin{center}
$\mathbf{P}_2$
\end{center}

\begin{center}

{\footnotesize

$a$ beats $c$ by $33 - 15 = 18$

$b$ beats $a$ by $32 - 16 = 16$

$c$ beats $b$ by $31 - 17 = 14$

$b$ beats $d$ by $28 - 20 = 8$

$d$ beats $c$ by $27 - 21 = 6$

$d$ beats $a$ by $26 - 22 = 4$

}
\end{center}

\end{minipage}

\end{center}
\caption{Profiles realizing $\mathbb{M}_1$ and $\mathbb{M}_2$ and the transition between them in Figure \ref{RelMarginsFig}.}\label{UpperLeftToMiddleProfs}
\vspace{.2in}
\end{figure}

\begin{figure}[!htb]

\begin{center}
\begin{minipage}{1.5in}
\begin{center}
\setlength{\tabcolsep}{4pt} 
\begin{tabular}{cccccc}
$10$ & $16$ & $1$ & $5$ & $7$ & $16$\\\hline 
$a$ & $b$ & $b$ & $b$ & $d$ & $d$\\ 
$c$ & $a$ & $a$ & $d$ & $a$ & $c$\\ 
$b$ & $c$ & $d$ & $a$ & $c$ & $b$\\ 
$d$ & $d$ & $c$ & $c$ & $b$ & $a$
\end{tabular}
\end{center}
\begin{center}
$\mathbf{Q}_2$
\end{center}

\begin{center}
{\footnotesize

$a$ beats $c$ by $39 - 16 = 23$

$b$ beats $a$ by $38 - 17 = 21$

$c$ beats $b$ by $33 - 22 = 11$

$b$ beats $d$ by $32 - 23 = 9$

$d$ beats $c$ by $29 - 26 = 3$

$d$ beats $a$ by $28 - 27 = 1$

}
\end{center}
\end{minipage} $=$ \begin{minipage}{.4in}
\begin{center}
\begin{tabular}{c}
$4$\\\hline 
$b$\\ 
$d$\\ 
$a$\\ 
$c$
\end{tabular}
\end{center}
\begin{center}
$\mathbf{Q}_{2\leftarrow 3}$
\end{center}

\begin{center}
{\footnotesize 
$\,$

$\,$

$\,$

$\,$

$\,$

$\,$}
\end{center}
\end{minipage} $+$ \begin{minipage}{1.5in}

\begin{center}
\setlength{\tabcolsep}{4pt} 
\begin{tabular}{cccccc}
$10$ & $16$ & $1$ & $1$ & $7$ & $16$\\\hline 
$a$ & $b$ & $b$ & $b$ & $d$ & $d$\\ 
$c$ & $a$ & $a$ & $d$ & $a$ & $c$\\ 
$b$ & $c$ & $d$ & $a$ & $c$ & $b$\\ 
$d$ & $d$ & $c$ & $c$ & $b$ & $a$
\end{tabular}
\end{center}

\begin{center}
$\mathbf{Q}_3$
\end{center}

\begin{center}

{\footnotesize

$a$ beats $c$ by $35 - 16 = 19$

$b$ beats $a$ by $34 - 17 = 17$

$c$ beats $b$ by $33 - 18 = 15$

$b$ beats $d$ by $28 - 23 = 5$

$a$ beats $d$ by $27 - 24 = 3$

$c$ beats $d$ by $26 - 25 = 1$

}
\end{center}

\end{minipage}

\end{center}

\caption{Profiles realizing $\mathbb{M}_2$ and $\mathbb{M}_3$ and the transition between them in Figure \ref{RelMarginsFig}.}\label{UpperRightToMiddleProfs}
\vspace{.2in}
\end{figure}

\begin{figure}[!htb]
\begin{center}
\begin{minipage}{1.5in}
\begin{center}
\setlength{\tabcolsep}{4pt} 
\begin{tabular}{ccccc}
$14$ & $4$ & $8$ & $11$ & $2$\\\hline 
$a$ & $b$ & $b$ & $c$ & $c$\\ 
$d$ & $a$ & $d$ & $b$ & $d$\\ 
$c$ & $c$ & $a$ & $d$ & $b$\\ 
$b$ & $d$ & $c$ & $a$ & $a$
\end{tabular}
\end{center}
\begin{center}
$\mathbf{R}_1$
\end{center}

\begin{center}
{\footnotesize

$c$ beats $b$ by $27 - 12 = 15$

$a$ beats $c$ by $26 - 13 = 13$

$b$ beats $a$ by $25 - 14 = 11$

$b$ beats $d$ by $23 - 16 = 7$

$d$ beats $c$ by $22 - 17 = 5$

$d$ beats $a$ by $21 - 18 = 3$

}
\end{center}
\end{minipage} $+$ \begin{minipage}{.4in}
\begin{center}
\begin{tabular}{c}
$2$\\\hline 
$a$\\ 
$d$\\ 
$c$\\ 
$b$
\end{tabular}
\end{center}
\begin{center}
$\mathbf{R}_{1\to 4}$
\end{center}

\begin{center}
{\footnotesize 
$\,$

$\,$

$\,$

$\,$

$\,$

$\,$}
\end{center}
\end{minipage} $=$ \begin{minipage}{1.5in}

\begin{center}
\setlength{\tabcolsep}{4pt} 
\begin{tabular}{ccccc}
$16$ & $4$ & $8$ & $11$ & $2$\\\hline 
$a$ & $b$ & $b$ & $c$ & $c$\\ 
$d$ & $a$ & $d$ & $b$ & $d$\\ 
$c$ & $c$ & $a$ & $d$ & $b$\\ 
$b$ & $d$ & $c$ & $a$ & $a$
\end{tabular}
\end{center}

\begin{center}
$\mathbf{R}_4$
\end{center}

\begin{center}

{\footnotesize

$c$ beats $b$ by $29 - 12 = 17$

$a$ beats $c$ by $28 - 13 = 15$

$b$ beats $a$ by $25 - 16 = 9$

$d$ beats $c$ by $24 - 17 = 7$

$b$ beats $d$ by $23 - 18 = 5$

$d$ beats $a$ by $21 - 20 = 1$

}
\end{center}

\end{minipage}

\end{center}
\caption{Profiles realizing $\mathbb{M}_1$ and $\mathbb{M}_4$ and the transition between them in Figure \ref{RelMarginsFig}.}\label{LowerLeftToMiddleProfs}
\vspace{.2in}
\end{figure}

\begin{figure}[!htb]

\begin{center}
\begin{minipage}{1.5in}
\begin{center}
\setlength{\tabcolsep}{4pt} 
\begin{tabular}{ccccccc}
$16$ & $9$ & $5$ & $4$ & $3$ & $3$ & $14$\\\hline 
$a$ & $b$ & $b$ & $c$ & $d$ & $d$ & $d$\\ 
$c$ & $a$ & $d$ & $d$ & $a$ & $b$ & $c$\\ 
$b$ & $d$ & $a$ & $b$ & $c$ & $a$ & $b$\\ 
$d$ & $c$ & $c$ & $a$ & $b$ & $c$ & $a$
\end{tabular}
\end{center}
\begin{center}
$\mathbf{S}_4$
\end{center}

\begin{center}
{\footnotesize

$c$ beats $b$ by $37 - 17 = 20$

$a$ beats $c$ by $36 - 18 = 18$

$b$ beats $a$ by $35 - 19 = 16$

$d$ beats $c$ by $34 - 20 = 14$

$b$ beats $d$ by $30 - 24 = 6$

$d$ beats $a$ by $29 - 25 = 4$

}
\end{center}
\end{minipage} $=$ \begin{minipage}{.4in}
\begin{center}
\begin{tabular}{c}
$3$\\\hline 
$d$\\ 
$a$\\ 
$c$\\ 
$b$
\end{tabular}
\end{center}
\begin{center}
$\mathbf{S}_{4\leftarrow 5}$
\end{center}

\begin{center}
{\footnotesize 
$\,$

$\,$

$\,$

$\,$

$\,$

$\,$}
\end{center}
\end{minipage} $+$ \begin{minipage}{1.5in}

\begin{center}
\setlength{\tabcolsep}{4pt} 
\begin{tabular}{cccccc}
$16$ & $9$ & $5$ & $4$ & $3$ & $14$\\\hline 
$a$ & $b$ & $b$ & $c$ & $d$ & $d$\\ 
$c$ & $a$ & $d$ & $d$ & $b$ & $c$\\ 
$b$ & $d$ & $a$ & $b$ & $a$ & $b$\\ 
$d$ & $c$ & $c$ & $a$ & $c$ & $a$
\end{tabular}
\end{center}

\begin{center}
$\mathbf{S}_5$
\end{center}

\begin{center}

{\footnotesize

$b$ beats $a$ by $35 - 16 = 19$\\

$c$ beats $b$ by $34 - 17 = 17$\\

$a$ beats $c$ by $33 - 18 = 15$\\

$d$ beats $c$ by $31 - 20 = 11$\\

$b$ beats $d$ by $30 - 21 = 9$\\

$d$ beats $a$ by $26 - 25 = 1$

}
\end{center}

\end{minipage}

\end{center}

\caption{Profiles realizing $\mathbb{M}_4$ and $\mathbb{M}_5$ and the transition between them in Figure \ref{RelMarginsFig}.}\label{LowerRightToMiddleProfs}
\end{figure}

\newpage

\subsection*{Acknowledgements}

I thank Yifeng Ding, Jobst Heitzig, Milan Moss\'{e}, Eric Pacuit, Zoi Terzopoulou, Nic Tideman, Snow Zhang, Bill~Zwicker, and an anonymous referee for helpful comments.

\newpage

\bibliographystyle{plainnat}

\end{document}